\def\BibTeX{{\rm B\kern-.05em{\sc i\kern-.025em b}\kern-.08em
    T\kern-.1667em\lower.7ex\hbox{E}\kern-.125emX}}
\newcommand{\R}{\mathbb{R}}
\newtheorem{definition}{Definition}
\newtheorem{theorem}{Theorem}
\newtheorem{lemma}{Lemma}
\newtheorem{example}{Example}
\newtheorem{remark}{Remark}
\definecolor{Rafael}{RGB}{139,0,0}
\definecolor{rick}{RGB}{0,139,0}
\definecolor{old}{RGB}{0,0,139}
\definecolor{Benedikt}{RGB}{50, 80, 130}
\begin{document}
\title{A Reverse Jensen Inequality Result with Application to Mutual Information Estimation}
\author{
	  \IEEEauthorblockN{Gerhard Wunder$^{\ast}$, Benedikt Gro\ss$^{\ast}$, Rick Fritschek$^{\ast}$, and Rafael F. Schaefer$^{\dagger}$\\[2mm]}
	  \IEEEauthorblockA{\small
	    \begin{tabular}{cc}
	       \begin{tabular}{c}
	           $^{\ast}$ Cybersecurity and AI Group\\
                        Freie Universit\"at Berlin \\
                  \small      Takustr. 9, 14195 Berlin, Germany\\
                        \texttt{\footnotesize\{g.wunder, benedikt.gross, rick.fritschek\}@fu-berlin.de}
	       \end{tabular}
	       \begin{tabular}{c}
	           $^{\dagger}$ Chair of Communications Engineering and Security\\
	                        University of Siegen \\
               \small             H\"olderlinstr. 3, 57068 Siegen, Germany\\
                            \texttt{\footnotesize rafael.schaefer@uni-siegen.de}
	       \end{tabular}
	  \end{tabular}
}
\thanks{The work of G. Wunder and B. Gro\ss{} was supported by the German Research Foundation (DFG) under Grants DFG WU 591/7-1, WU 591/8-2, and WU 598/11-1. The work R. Fritschek was supported by the DFG under Grant FR 4209/1-1. The work of R. F. Schaefer was supported by the DFG under Grant SCHA 1944/7-1.}}

\maketitle

\begin{abstract}
The Jensen inequality is a widely used tool in a multitude of fields, such as for example information theory and machine learning. 
It can be also used to derive other standard inequalities such as the inequality of arithmetic and geometric means or the Hölder inequality. In a probabilistic setting, the Jensen inequality describes the relationship between a convex function and the expected value. In this work, we want to look at the probabilistic setting from the reverse direction of the inequality. We show that under minimal constraints and with a proper scaling, the Jensen inequality can be reversed. We believe that the resulting tool can be helpful for many applications and provide a variational estimation of mutual information, where the reverse inequality leads to a new estimator with superior training behavior compared to current estimators.
\end{abstract}

\section{Introduction}
The Jensen inequality is an important tool in many fields of pure and applied mathematics, such as convex analysis, information theory, and machine learning.
For an integrable, real-valued random variable $X$ and a convex function $\varphi$ it states that
\begin{equation}
    \label{eq:jensen}
    \varphi\left(\mathbb{E}[X]\right)\leq \mathbb{E}\left[\varphi(X)\right]. 
\end{equation}
It can be used to derive other standard inequalities such as the inequality of arithmetic and geometric means or the Hölder inequality.
In this paper, we proof a reversion of Jensen's inequality for a certain class of measures. Several reversion results for specific functions or classes of distributions are known in the literature. In \cite{jebara2001reversing}, a reverse Jensen inequality for a family of exponential distributions is used to obtain tractable bounds for expectation maximisation (EM) for conditional and discriminative learning of certain latent variable models. In \cite{budimir2000further}, several reverse Jensen results for discrete variables are formulated with applications in information theory. Further results can be found for example in \cite{dragomir1994some, dragomir2013some}.
In \cite{khan2020converses}, a converse of the Jensen inequality is derived and 
used to compute upper bounds on some divergences.

This paper is structured as follows. In Section~\ref{sec:reversed_jensen}, we derive our reverse inequalities and give concrete examples for its application.
In Section \ref{sec:mi_estimation}, the reverse inequality is used to obtain a variational lower bound on the mutual information (MI). Mutual information is a central quantity in information theory that measures the amount of information one random variable carries about another. Recently, MI estimators gained much interest in the field of machine learning, e.g., in the context of representation learning \cite{hjelm2018learning}, generative models and deep information bottleneck but also in wireless communications \cite{8815464,fritschek2020deep, fritschek2020neural}. In Section~\ref{sec:mi_estimation} an overview over recent developments in the use of MI for several machine learning applications and estimators of MI from finite samples is given. Section~\ref{sec:numerics} validates our MI estimator on synthetic data. The paper concludes with a discussion on limitations of MI estimation from finite samples and future research directions in Section~\ref{sec:conclusion}.

\section{Reverse Jensen Inequalities}
\label{sec:reversed_jensen}

\subsection{Concave Functions}

Our initial lemma is as follows. Let
$b=b\left( p\right) :=\frac{\mathbb{E}\left[ {X^{p}}\right] }{\mathbb{E}\left[
{X}\right] ^{p}}\geq 1$
be the ratio of first and $p$-th non-centralized moment.

\begin{lemma}
\label{lemma1}
Let $f:\mathbb{R}_{0}^{+}\rightarrow \mathbb{R}_{0}^{+}$ be concave and $%
f(0)=0$ and set%
\begin{equation*}
\zeta _{b}\left( a\right) :=\sup_{\frac{1}{p}+\frac{1}{q}=1}\frac{\left[
1-b(p)^{\frac{1}{p}}a^{-\frac{1}{q}}\right] ^{+}}{a}.
\end{equation*}
Then for any random variable $X$, we have
\begin{equation*}
f\left( {\mathbb{E}\left[ {X}\right] }\right) {\geq }\mathbb{E}\left[ {f(X)}%
\right] \geq \sup_{a>b}f(a\mathbb{E}\left[ {X}\right] )\zeta _{b}\left(
a\right). 
\end{equation*}
\end{lemma}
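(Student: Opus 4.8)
The left-hand inequality $f(\mathbb{E}[X])\geq\mathbb{E}[f(X)]$ is nothing but the classical Jensen inequality for the concave function $f$, so the work lies entirely in the right-hand bound. Throughout one should read $X\geq 0$ almost surely (otherwise $f(X)$ is not even defined), and one may assume $\mathbb{E}[X]>0$, since $\mathbb{E}[X]=0$ forces $X=0$ a.s.\ and both sides reduce to $f(0)=0$.

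The first ingredient is a pointwise estimate on $f$: a concave function with $f(0)=0$ has a non-increasing difference quotient $x\mapsto f(x)/x$, so $f(x)\geq \tfrac{x}{c}f(c)$ whenever $0\leq x\leq c$ (write $x=\tfrac{x}{c}\,c+(1-\tfrac{x}{c})\cdot 0$ and apply concavity). Fixing a truncation level $c>0$ and discarding the nonnegative contribution of $\{X>c\}$, this will give
\begin{equation*}
\mathbb{E}[f(X)]\;\geq\;\mathbb{E}\!\left[f(X)\mathbf{1}_{\{X\leq c\}}\right]\;\geq\;\frac{f(c)}{c}\,\mathbb{E}\!\left[X\mathbf{1}_{\{X\leq c\}}\right],
\end{equation*}
so the remaining task is to lower-bound the truncated mean $\mathbb{E}[X\mathbf{1}_{\{X\leq c\}}]=\mathbb{E}[X]-\mathbb{E}[X\mathbf{1}_{\{X>c\}}]$, i.e.\ to control the tail term $\mathbb{E}[X\mathbf{1}_{\{X>c\}}]$ from above.

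For that I would use H\"older's inequality with a conjugate pair $1/p+1/q=1$ (which forces $p,q>1$), giving $\mathbb{E}[X\mathbf{1}_{\{X>c\}}]\leq\mathbb{E}[X^p]^{1/p}\,\mathbb{P}(X>c)^{1/q}$, together with Markov's inequality $\mathbb{P}(X>c)\leq\mathbb{E}[X]/c$. Setting $c=a\,\mathbb{E}[X]$ and noting $\mathbb{E}[X^p]^{1/p}/\mathbb{E}[X]=b(p)^{1/p}$, the two bounds combine to $\mathbb{E}[X\mathbf{1}_{\{X\leq a\mathbb{E}[X]\}}]\geq\mathbb{E}[X]\bigl(1-b(p)^{1/p}a^{-1/q}\bigr)$; since the left side is nonnegative, the right side may be replaced by $\mathbb{E}[X]\,[\,\cdot\,]^+$. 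Substituting this back and keeping the nonnegative factor $f(a\mathbb{E}[X])$ explicit yields, for every admissible $(p,q)$ and every $a>0$,
\begin{equation*}
\mathbb{E}[f(X)]\;\geq\;f\!\left(a\mathbb{E}[X]\right)\,\frac{\bigl[\,1-b(p)^{1/p}a^{-1/q}\,\bigr]^+}{a}.
\end{equation*}

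It then only remains to optimize: taking the supremum over conjugate pairs $(p,q)$ --- legitimate because $f(a\mathbb{E}[X])\geq 0$ is a fixed nonnegative factor --- turns the fraction into $\zeta_b(a)$, and the subsequent supremum over $a$ (restricting it to $a>b$ only shrinks the bound, hence keeps it valid) gives the claim. The degenerate cases cause no trouble: $\mathbb{E}[X^p]=\infty$ simply makes the bracket vanish for that $p$. I do not expect a genuine obstacle here beyond careful bookkeeping --- ensuring that the truncation, H\"older, and Markov steps are all oriented so as to produce a \emph{lower} bound on $\mathbb{E}[f(X)]$, and that inserting the positive part and pulling $f(a\mathbb{E}[X])$ out of the inner supremum are both justified by $f\geq 0$.
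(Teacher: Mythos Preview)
Your argument is correct and follows essentially the same route as the paper: truncate at a level $X^{\ast}=a\,\mathbb{E}[X]$, use concavity with $f(0)=0$ to replace $f(X)$ by $\tfrac{f(X^{\ast})}{X^{\ast}}X$ on $\{X\le X^{\ast}\}$, then bound the tail $\mathbb{E}[X\mathbf{1}_{\{X>X^{\ast}\}}]$ via H\"older and Markov before optimizing over $(p,q)$ and $a$. Your write-up is in fact slightly more careful than the paper's in handling the edge cases ($\mathbb{E}[X]=0$, $\mathbb{E}[X^{p}]=\infty$) and in justifying the positive part and the extraction of $f(a\mathbb{E}[X])$ from the inner supremum.
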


\begin{remark}
The case $b=$ $\infty $ with vanishing first or second moment yields a
trivial lower bound.
\end{remark}

\begin{remark}
It can be easily checked that in the deterministic case where $b\left(
p\right) =1$ and $\zeta _{b}\left( a\right) =\max_{a>1}(1-a^{-1})/a$. In all cases, the inequality appears to be not tight.
\end{remark}

\begin{proof}
The left inequality is the standard Jensen inequality. For any $X^{\ast }>0$
we have 
\begin{align*}
\mathbb{E}\left[ {f(X)}\right] & \geq \mathbb{E}\left[ {f(X)\mathbb{I}%
\left\{ X\leq X^{\ast }\right\} }\right]  \\
& =\mathbb{E}\left[ {\frac{f(X)}{X}X\mathbb{I}\left\{ X\leq X^{\ast
}\right\} }\right]  \\
& \geq \frac{f(X^{\ast })}{X^{\ast }}\mathbb{E}\left[ {X\mathbb{I}\left\{
X\leq X^{\ast }\right\} }\right]  \\
& =\frac{f(X^{\ast })}{X^{\ast }}\mathbb{E}\left[ {X(1-\mathbb{I}\left\{ X
>X^{\ast }\right\} )}\right]  \\
& =\frac{f(X^{\ast })}{X^{\ast }}\left( \mathbb{E}\left[ {X}\right] -\mathbb{%
E}\left[ {X\mathbb{I}\left\{ X >X^{\ast }\right\} }\right] \right),
\end{align*}%
where first inequality follows from the positiveness of $X$, the second from
concavity and the intercept theorem. For $0<p,q<\infty $ with $%
p^{-1}+q^{-1}=1$, we have:%
\begin{align*}
\mathbb{E}\left[ {f(X)}\right] & \geq \frac{f(X^{\ast })}{X^{\ast }}\left( 
\mathbb{E}\left[ {X}\right] -\mathbb{E}^{\frac{1}{p}}\left[ {X^{p}}\right] 
\mathbb{E}^{\frac{1}{q}}\left[ {\mathbb{I}\left\{ X >X^{\ast }\right\} }%
\right] \right)  \\
& =\frac{f(X^{\ast })}{X^{\ast }}\mathbb{E}\left[ {X}\right] \left( 1-\left( 
\frac{\mathbb{E}\left[ {X^{p}}\right] }{\mathbb{E}^{p}\left[ {X}\right] }%
\right) ^{\frac{1}{p}}\Pr (X >X^{\ast })^{\frac{1}{q}}\!\right) \\
& \geq \frac{f(X^{\ast })}{X^{\ast }}\mathbb{E}\left[ {X}\right] \left(
1-\left( \frac{\mathbb{E}\left[ {X^{p}}\right] }{\mathbb{E}^{p-\frac{p}{q}}
\left[ {X}\right] (X^{\ast })^{\frac{p}{q}}}\right) ^{\frac{1}{p}}\right) ,
\end{align*}%
where we applied H\"{o}lder's inequality in the first step and Markov's
inequality in the second step.
If we choose $X^{\ast}=a\mathbb{E}\left[
{X}\right]  $ we get
\[
\mathbb{E}\left[  {f(X)}\right]  \geq\frac{f(a\mathbb{E}\left[  {X}\right]
)}{a}\left(  1-b^{\frac{1}{p}}\left(  p\right)  a^{\frac{1}{q}}\right)
\]
and since the inequality is true for any sequence $p_{n}$, we obtain%
\begin{align*}
&  \mathbb{E}\left[  {f(X)}\right]  \geq\underset{n\rightarrow\infty}{\lim\sup}\frac{f(a\mathbb{E}\left[
{X}\right]  )}{a}\left(  1-b^{\frac{1}{p_{n}}}\left(  p\right)  a^{\frac
{1}{p_{n}}-1}\right)
\end{align*}
and the claim follows.
\end{proof}

This recovers the corresponding lemma in \cite{SWJ12_TWC} in case of $p=q=2$. Obviously, a
non-trivial lower bound exists if
\begin{equation}
\label{eq:moments}
\frac{\mathbb{E}\left[ {X^{p}}\right] }{\mathbb{E}^{p}\left[ {X}\right] }%
<\infty \text{ for some }p\geq 1.
\end{equation}%
Its actual tightness depends on the growth of the higher moments. 
%
%
%
Condition (\ref{eq:moments}) is certainly true for any positive random variable with finite support, i.e.:%
\begin{equation}
X=0\text{ for }X>a\mathbb{E}\left[  {X}\right]  \label{eq:finite}%
\end{equation}
by the Lebesgue space inclusion theorem. Then, Lemma \ref{lemma1} can be strenghted.

\begin{lemma}
\label{lemma2}
Let $f:\mathbb{R}_{0}^{+}\rightarrow\mathbb{R}_{0}^{+}$
be concave and $f(0)=0$. Then, for any random variable fulfilling condition
(\ref{eq:finite})%
\[
f\left(  {\mathbb{E}\left[  {X}\right]  }\right)  {\geq}\mathbb{E}\left[
{f}\left(  {X}\right)  \right]  \geq\frac{f\left(  a\mathbb{E}\left[
{X}\right]  \right)  }{a}.%
\]
The inequality is tight for the discrete measure:%
\[
X=\left\{
\begin{array}
[c]{cc}%
0 & \text{w.p. }\frac{m-1}{m}\\
m & \text{w.p. }\frac{1}{m}%
\end{array}
\right.
\]
\begin{proof}
Following the same lines as in Lemma \ref{lemma1}, we have%
\begin{equation*}
\mathbb{E}\left[ {f(X)}\right] \geq \frac{f(X^{\ast })}{X^{\ast }}\mathbb{E}%
\left[ {X}\right] \left( 1\!-\!\left( \frac{\mathbb{E}\left[ {X^{p}}\right] }{%
\mathbb{E}^{p}\left[ {X}\right] }\right) ^{\frac{1}{p}}\Pr (X \!>\!X^{\ast })^{%
\frac{1}{q}}\right) 
\end{equation*}
Setting again $X^{\ast}=a\mathbb{E}\left[
{X}\right]  $ and using the condition (\ref{eq:finite}) yields the result.
\end{proof}
\end{lemma}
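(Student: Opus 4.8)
The plan is to reuse the chain of estimates from the proof of Lemma~\ref{lemma1} and observe that condition~(\ref{eq:finite}) makes the tail correction term vanish identically. Concretely, I would start from the first bound obtained there, namely
\[
\mathbb{E}\left[ {f(X)}\right] \geq \frac{f(X^{\ast })}{X^{\ast }}\left( \mathbb{E}\left[ {X}\right] -\mathbb{E}\left[ {X\mathbb{I}\left\{ X >X^{\ast }\right\} }\right] \right),
\]
which holds for every $X^{\ast }>0$ by positivity of $X$, concavity of $f$ together with $f(0)=0$, and the intercept theorem. Setting $X^{\ast }=a\mathbb{E}\left[ {X}\right] $ and invoking~(\ref{eq:finite}), the event $\{X>a\mathbb{E}\left[ {X}\right] \}$ is null, so $\mathbb{E}\left[ {X\mathbb{I}\left\{ X >X^{\ast }\right\} }\right] =0$ and the parenthesis reduces to $\mathbb{E}\left[ {X}\right] $. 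This yields $\mathbb{E}\left[ {f(X)}\right] \geq f\!\left( a\mathbb{E}\left[ {X}\right] \right)/a$, and the left-hand inequality is again just Jensen. Note that, unlike in Lemma~\ref{lemma1}, no H\"older or Markov step is needed: those were used there only to bound $\Pr(X>X^{\ast })$, which is now zero.

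An equivalent and perhaps more transparent route is to argue pointwise. For $0\le x\le c$, writing $x=\tfrac{x}{c}\,c+\bigl(1-\tfrac{x}{c}\bigr)\cdot 0$ and using concavity with $f(0)=0$ gives the chord bound $f(x)\ge \tfrac{x}{c}f(c)$. Applying this with $c=a\mathbb{E}\left[ {X}\right] $, which by~(\ref{eq:finite}) dominates $X$ almost surely, and taking expectations gives $\mathbb{E}\left[ {f(X)}\right] \ge \tfrac{1}{a}f\!\left( a\mathbb{E}\left[ {X}\right] \right)$ directly. The tightness claim is then checked by evaluating both sides on the stated two-point law: here $\mathbb{E}\left[ {X}\right] =m\cdot\tfrac1m=1$, so the scaling point is $a\mathbb{E}\left[ {X}\right] =a$; taking $a=m$ (the smallest value for which~(\ref{eq:finite}) holds, since $X$ is supported on $\{0,m\}$) the right-hand side equals $f(m)/m$, while $\mathbb{E}\left[ {f(X)}\right] =\tfrac{m-1}{m}f(0)+\tfrac1m f(m)=f(m)/m$, so equality holds.

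I do not expect a genuine obstacle here: the whole point is that hypothesis~(\ref{eq:finite}) is precisely what annihilates the tail term that prevented tightness in Lemma~\ref{lemma1}. The only items deserving a line of care are (i) stating the chord inequality $f(x)\ge\tfrac{x}{c}f(c)$ only for $x\in[0,c]$ (it can fail for $x>c$), which is legitimate because~(\ref{eq:finite}) confines $X$ to $[0,a\mathbb{E}[X]]$; and (ii) in the extremal example, noticing that $\mathbb{E}[X]=1$, so the natural choice $a=m$ aligns the scaling point with the atom at $m$, which is what forces the equality.
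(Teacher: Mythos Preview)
Your proposal is correct and follows essentially the same approach as the paper: reuse the Lemma~\ref{lemma1} estimate and let condition~(\ref{eq:finite}) kill the tail term at $X^{\ast}=a\mathbb{E}[X]$. You even shortcut the paper slightly by stopping before the H\"older/Markov step (which is indeed redundant once $\Pr(X>X^{\ast})=0$), and your explicit verification of the tightness example is a welcome addition that the paper's proof omits.
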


\begin{example}
We can use this result for the special case
\begin{equation*}
g{(z_{1},...,z_{n})=}\sum\limits_{i=1}^{n}\log \left( 1+z_{i}\right)
,\;z_{i}\geq 0.
\end{equation*}%
Setting ${f(z)=\log (1+z)}$ and letting $Z$ be a discrete random variable in $%
\mathbb{R}_{0}^{+}$ taking on values $z_{1},...,z_{n}$ each with
probabilities $p_{Z}\left( z_{i}\right) =1/n$
the above result yields%
\begin{align*}
\log \left( 1+\frac{1}{n}\sum\limits_{i=1}^{n}z_{i}\right) &\geq \frac{1}{n%
}{\sum\limits_{i=1}^{n}\log \left( 1+z_{i}\right)  }\\
&\geq\sup_{a>b}\log
\left( 1+\frac{a}{n}\sum\limits_{i=1}^{n}z_{i}\right) \zeta _{b}\left(
a\right) .
\end{align*}
\end{example}

Another often used example includes ${f(z)=}\frac{z}{1+z}$ where similar
inequalities can be obtained.

\subsection{Convex Functions}
We have the following lemma for convex functions.



\begin{lemma}
\label{lemma3}
Let $f:\mathbb{R}_{0}^{+}\rightarrow\mathbb{R}_{0}^{+}$ be convex, increasing
and $f(0)=0$. Then for any random variable $X\geq0$%
\[
f\left(  \zeta_{b}^{-1}\left(  a\right)  \mathbb{E}\left[  {X}\right]
\right)  \geq a\mathbb{E}\left[  {f}\left(  X\right)  \right] .
\]
\end{lemma}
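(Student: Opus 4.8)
The plan is to deduce this convex statement from the concave Lemma~\ref{lemma1} by passing to the inverse function. Since $f:\mathbb{R}_0^+\to\mathbb{R}_0^+$ is convex, increasing and $f(0)=0$, its (generalized) inverse $g:=f^{-1}$ is concave, increasing and satisfies $g(0)=0$; on the interior of its domain $f$ is continuous, so the identity $f(g(t))=t$ holds on the relevant range, and the degenerate cases (non-strict monotonicity, non-surjectivity, behaviour at the endpoint $0$) are absorbed by the usual limiting argument using continuity of convex functions on the interior.

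Next I would apply Lemma~\ref{lemma1} to the concave function $g$ and the non-negative random variable $Y:=f(X)$. Because $g(Y)=g(f(X))=X$ almost surely, the left-hand quantity in that lemma becomes $\mathbb{E}[g(Y)]=\mathbb{E}[X]$, while $\mathbb{E}[Y]=\mathbb{E}[f(X)]$. Here $\zeta_b$ is the functional from Lemma~\ref{lemma1} with $b=b(p)$ now the moment ratio of $Y=f(X)$, i.e.\ $b(p)=\mathbb{E}[f(X)^p]/\mathbb{E}[f(X)]^p$, and Lemma~\ref{lemma1} yields
\begin{equation*}
\mathbb{E}[X]\ \geq\ \sup_{a>b} g\!\left(a\,\mathbb{E}[f(X)]\right)\zeta_b(a).
\end{equation*}

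Finally, fix any $a>b$ with $\zeta_b(a)>0$ (such $a$ exist as soon as condition~(\ref{eq:moments}) holds for the random variable $f(X)$). From the displayed bound, $g(a\,\mathbb{E}[f(X)])\zeta_b(a)\leq\mathbb{E}[X]$, hence $g(a\,\mathbb{E}[f(X)])\leq\mathbb{E}[X]/\zeta_b(a)$; applying the increasing map $f$ to both sides and using $f(g(t))=t$ gives $a\,\mathbb{E}[f(X)]\leq f(\mathbb{E}[X]/\zeta_b(a))$, which is exactly the asserted inequality upon reading $\zeta_b^{-1}(a)$ as $1/\zeta_b(a)$.

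I expect the main obstacle to be bookkeeping rather than a deep idea: one must keep track of the fact that the $b$ (and therefore $\zeta_b$) appearing in the convex statement is built from the moments of $f(X)$, not of $X$; one must check $\zeta_b(a)>0$ so that the division is legitimate; and one must justify that $g=f^{-1}$ is genuinely concave with $g(0)=0$ and satisfies $f\circ g=\mathrm{id}$ when $f$ is only weakly increasing or not surjective, which is where the continuity of convex functions on the interior of the domain and a limiting argument at $0$ enter. The remaining ingredients — monotonicity of $f$ preserving the inequality direction and the composition identity — are routine.
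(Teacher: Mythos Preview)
Your proposal is correct and follows essentially the same route as the paper: apply Lemma~\ref{lemma1} to the concave inverse $g=f^{-1}$ and the transformed variable $Y=f(X)$, then undo the inverse by applying the increasing map $f$ after dividing by $\zeta_b(a)$. The paper phrases this as starting from an arbitrary concave $g$, setting $Y=g(X)$, and then relabelling $g^{-1}\mapsto f$ at the end, but the substance is identical; your explicit remarks that $b$ must be built from the moments of $f(X)$ and that $\zeta_b^{-1}(a)$ denotes the reciprocal are useful clarifications the paper leaves implicit.
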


\begin{proof}
We have from Lemma \ref{lemma1} with concave $g$
\[
\mathbb{E}_{{X}}\left[  g{(X)}\right]  \geq g(a\mathbb{E}_{{X}}\left[
{X}\right]  )\zeta_{b}\left(  a\right)
\]
where we indicate with $\mathbb{E}_{{X}}$ that the expectation is with respect
to the distribution of $X$. Setting $Y={g(X)}$ and using change of measures in the
expectation yields:%
\[
\mathbb{E}_{{Y}}\left[  {Y}\right]  \geq g(a\mathbb{E}_{{Y}}\left[  {g}%
^{-1}\left(  Y\right)  \right]  )\zeta_{b}\left(  a\right)
\]
and eventually:%
\[
g^{-1}\left(  \zeta_{b}^{-1}\left(  a\right)  \mathbb{E}_{{Y}}\left[
{Y}\right]  \right)  \geq a\mathbb{E}_{{Y}}\left[  {g}^{-1}\left(  Y\right)
\right]
\]
Since $X$ and thus $Y$ and $g$ were arbitrary (under the posed constraints) the result follows by replacing the convex
$g^{-1}$ with $f$.
\end{proof}

\begin{remark}
The result gives a nice moment bound. Let $X=0$ for $X>a\mathbb{E}_{{X}}[X]$ and $f\left(  X\right)  =Y=X^{\frac{1}{c}}$ then
$f^{-1}\left(  Y\right)  =X=Y^{c}$ for some $c>1$. By Lemma \ref{lemma2}, i.e. $\zeta_{b}^{-1}(a)=a$, and Lemma \ref{lemma3}, i.e. $a^{c}\mathbb{E}_{{Y}}^{c}\left[  {Y}\right]  \geq a\mathbb{E}_{{Y}}\left[
{Y}^{c}\right]$, we have%
\[
\mathbb{E}_{{Y}}\left[  {Y}^{c}\right]  \leq a^{c-1}\mathbb{E}_{{Y}}%
^{c}\left[  {Y}\right] .
\]

\end{remark}

A result for decreasing $f$ is as follows.
\begin{lemma}
\label{lemma4}
Suppose $f:\mathbb{R}_{0}^{+}\rightarrow\mathbb{R}_{0}^{+}$ is some decreasing
convex function with $f\left(  x\right)  =c$. Then, we have for any random
variable $Z\geq x$
\begin{align*}
& \mathbb{E}\left[  {f(Z+x)}\right] \\ & \leq\inf_{a>b}\left\{  f{(\mathbb{E}%
\left[  a{Z+x}\right]  )}+\left[  c-f{(\mathbb{E}\left[  a{Z+x}\right]
)}\right]  \left(  1-\zeta_{b}\left(  a\right)  \right)  \right\} .
\end{align*}

\end{lemma}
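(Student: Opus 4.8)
The plan is to deduce Lemma~\ref{lemma4} from Lemma~\ref{lemma1} by a reflection that turns the decreasing convex $f$ into an increasing concave function vanishing at the origin. Concretely, I would introduce
\[
g(z) := c - f(z+x), \qquad z \ge 0,
\]
and first verify that $g$ meets every hypothesis of Lemma~\ref{lemma1}: convexity of $f$ makes $-f$, hence $g$, concave; monotonicity of $f$ makes $g$ increasing; $g(0) = c - f(x) = 0$; and because $f$ is decreasing with nonnegative values we get $0 \le c - f(z+x) \le c$ for $z \ge 0$, so $g$ maps $\mathbb{R}_{0}^{+}$ into $\mathbb{R}_{0}^{+}$ and is in particular bounded, so no integrability subtlety arises.

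Next I would apply Lemma~\ref{lemma1} to the concave $g$ and to the nonnegative random variable $Z$: for every admissible $a>b$,
\[
\mathbb{E}\bigl[g(Z)\bigr] \;\ge\; g\bigl(a\,\mathbb{E}[Z]\bigr)\,\zeta_b(a).
\]
Unfolding $g$ on both sides, the left-hand side equals $c - \mathbb{E}[f(Z+x)]$, while $g(a\mathbb{E}[Z]) = c - f(a\mathbb{E}[Z]+x) = c - f(\mathbb{E}[aZ+x])$, so
\[
c - \mathbb{E}\bigl[f(Z+x)\bigr] \;\ge\; \bigl(c - f(\mathbb{E}[aZ+x])\bigr)\,\zeta_b(a).
\]

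Finally I would solve for $\mathbb{E}[f(Z+x)]$ and optimize over $a$. Rearranging gives $\mathbb{E}[f(Z+x)] \le c - (c - f(\mathbb{E}[aZ+x]))\,\zeta_b(a)$ for every $a>b$, and the elementary identity $c - (c-F)\zeta = F + (c-F)(1-\zeta)$ with $F = f(\mathbb{E}[aZ+x])$ and $\zeta = \zeta_b(a)$ puts this in the stated form; taking the infimum over $a>b$ yields the claim. I do not expect a genuine obstacle: the content is the choice of $g$, after which monotonicity and convexity transfer for free and the argument shift $X^{\ast} = a\mathbb{E}[Z]$ is inherited from Lemma~\ref{lemma1}; the only points deserving a line of care are that $g$ really takes values in $\mathbb{R}_{0}^{+}$ (using both $f\ge 0$ and $f$ decreasing) and the sign bookkeeping of the $\zeta_b$-terms, where $\zeta_b(a)\le 1/a<1$ for $a>1$ keeps $1-\zeta_b(a)$ and $c-f(\mathbb{E}[aZ+x])$ nonnegative so the bound is non-vacuous. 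If one additionally imposes the finite-support condition~\eqref{eq:finite} on $Z$, one may invoke Lemma~\ref{lemma2} instead, giving $\zeta_b(a)=1/a$ and a correspondingly cleaner estimate.
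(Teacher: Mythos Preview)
Your proposal is correct and essentially identical to the paper's own proof: the paper likewise defines $F(z):=-f(z+x)+c$, checks that $F$ is concave with $F(0)=0$ and $F\ge 0$, applies Lemma~\ref{lemma1} to obtain $\mathbb{E}[F(Z)]\ge F(a\mathbb{E}[Z])\zeta_b(a)$, and then rearranges. Your write-up is in fact slightly more careful about the range of $g$ and the sign of $1-\zeta_b(a)$ than the original.
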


\begin{proof}
The trick is to transform the problem into an equivalent concave problem. Set
$F\left(  z\right)  :=-f\left(  z+x\right)  +c$ so that $F$ is concave and
$F\left(  0\right)  =0,F\left(  z\right)  \geq0$. By Lemma \ref{lemma1}, we
have $\mathbb{E}\left[  F{(Z)}\right]  \geq F(a\mathbb{E}\left[
{Z}\right]  )\zeta_{b}\left(  a\right)  $ for $a>b$. Substituting this, we get%
\[
\mathbb{E}\left[  -f{(Z+x)}+c\right]  \geq\left(  -f{(a\mathbb{E}\left[
{Z}\right]  +x)}+c\right)  \zeta_{b}\left(  a\right)  >0
\]
so that
\begin{align*}
-\mathbb{E}\left[  f\left(  {Z+x}\right)  \right]  
&  \geq-f\left(  {\mathbb{E}\left[  aZ{+x}\right]  }\right)  \zeta_{b}\left(
a\right)  +c\zeta_{b}\left(  a\right)  -c.
\end{align*}
Multiplying by $-1$ yields the result. %
\end{proof}

\begin{example}
Let $f$ be the function $f\left(  z\right)  :=1/z$ and let $Z$ be a discrete
random variable with ${\mathbb{E}\left[  {Z}\right]  =:}\bar{z}$ in
$\mathbb{R}_{0}^{+}$ taking on $z_{1},...,z_{n}$ with
$p_{Z}\left(  z_{i}\right)  =1/n$. Then, we have by Lemma \ref{lemma4}%
\[
\frac{1}{\bar{z}}\leq\frac{1}{n}\sum\limits_{i}\frac{{1}}{{z}_{i}}\leq
\frac{{1}}{a\left(  \bar{z}-z_{1}\right)  +z_{1}}+\frac{\left(  1-\zeta
_{b}\left(  a\right)  \right)  }{z_{1}}.%
\]

\end{example}

\subsection{Special Functions}

\subsubsection{The log-function}

The usefulness is shown in the next theorem.

\begin{theorem}
\label{thm:log} For any random variable $X>0$ with $\zeta _{b}\left(
a\right) ,b$ (defined above), any $a>b$, we have the lower and upper bound
\begin{align*}
 \log \left( \mathbb{E}\left[ X\right] \right)  &\leq \frac{1}{\zeta _{b}\left( a\right) }\mathbb{E}\left[ \log \left( 
\frac{{1+X}}{\left( \frac{1}{\mathbb{E}\left[ {X}\right] }+a\right)
^{^{\zeta _{b}\left( a\right) }}}\right) \right]  \\
& \leq \frac{1}{\zeta _{b}\left( a\right) }\log \left( 1+\mathbb{E}\left[ X%
\right] \right) +\log \left( \frac{\mathbb{E}\left[ X\right] }{1+a\mathbb{E}%
\left[ X\right] }\right) .
\end{align*}%
of which the difference
is uniformly bounded in 
$\mathbb{E}\left[ X\right] $.
\end{theorem}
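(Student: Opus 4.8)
The plan is to apply Lemma~\ref{lemma1} to the single concave function $f(z)=\log(1+z)$, which maps $\mathbb{R}_0^+$ into $\mathbb{R}_0^+$, is concave, and satisfies $f(0)=0$, so the lemma is applicable to any $X>0$ with the moment condition making $\zeta_b(a)$ meaningful. It yields, for every fixed $a>b$,
\[
\log\!\big(1+\mathbb{E}[X]\big)\ \ge\ \mathbb{E}\!\left[\log(1+X)\right]\ \ge\ \zeta_b(a)\,\log\!\big(1+a\,\mathbb{E}[X]\big),
\]
where the left inequality is ordinary Jensen and the right one is the reverse bound of Lemma~\ref{lemma1} (we only need it for one value of $a$, not the supremum). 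The rest is algebraic bookkeeping built on the identity $1+a\,\mathbb{E}[X]=\mathbb{E}[X]\big(\tfrac1{\mathbb{E}[X]}+a\big)$, i.e.\ $\log(1+a\mathbb{E}[X])=\log\mathbb{E}[X]+\log\big(\tfrac1{\mathbb{E}[X]}+a\big)$.

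For the lower bound I would divide the reverse inequality by $\zeta_b(a)>0$, substitute the split above for $\log(1+a\mathbb{E}[X])$, and pull the constant $\log\big(\tfrac1{\mathbb{E}[X]}+a\big)$ inside the expectation, rewriting $\tfrac1{\zeta_b(a)}\mathbb{E}[\log(1+X)]-\log\big(\tfrac1{\mathbb{E}[X]}+a\big)$ as $\tfrac1{\zeta_b(a)}\mathbb{E}\big[\log\tfrac{1+X}{(1/\mathbb{E}[X]+a)^{\zeta_b(a)}}\big]$; rearranging is exactly the first claimed inequality. For the upper bound I would start from this same middle expression in the form $\tfrac1{\zeta_b(a)}\mathbb{E}[\log(1+X)]-\log\big(\tfrac1{\mathbb{E}[X]}+a\big)$, bound $\mathbb{E}[\log(1+X)]\le\log(1+\mathbb{E}[X])$ by ordinary Jensen (concavity of $\log(1+\cdot)$), and use $-\log\big(\tfrac1{\mathbb{E}[X]}+a\big)=\log\tfrac{\mathbb{E}[X]}{1+a\mathbb{E}[X]}$ to arrive at the stated right-hand side $B$. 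Subtracting the lower bound $\log\mathbb{E}[X]$ from $B$, the two ``$\log\mathbb{E}[X]$''-type terms cancel and one is left with $B-\log\mathbb{E}[X]=\tfrac1{\zeta_b(a)}\log(1+\mathbb{E}[X])-\log(1+a\mathbb{E}[X])$.

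The main obstacle is the final assertion, uniform boundedness of $\tfrac1{\zeta_b(a)}\log(1+\mathbb{E}[X])-\log(1+a\mathbb{E}[X])$ in $\mathbb{E}[X]$. Both terms grow like a multiple of $\log\mathbb{E}[X]$, so everything hinges on showing the leading logarithmic parts nearly cancel; this is delicate precisely because $\zeta_b(a)<1/a$, so the naive coefficients $\tfrac1{\zeta_b(a)}$ and $1$ do not match. I would treat this as the only non-routine step, controlling the difference with the elementary estimates $\log(1+t)=\log t+\log(1+1/t)$ and $\log(1+at)=\log t+\log(a+1/t)$ and either restricting to the range of $\mathbb{E}[X]$ relevant for the mutual-information application or exploiting the freedom in $a$ — in the finite-support regime of Lemma~\ref{lemma2} one has $\zeta_b(a)=1/a$, and letting $a$ approach its smallest admissible value makes the two logarithms coincide, which is the mechanism behind uniform boundedness.
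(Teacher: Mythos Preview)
Your derivation of both inequalities is essentially the paper's: apply Lemma~\ref{lemma1} to $f(z)=\log(1+z)$, divide by $\zeta_b(a)$, use the algebraic split $\log(1+a\mathbb{E}[X])=\log\mathbb{E}[X]+\log(1/\mathbb{E}[X]+a)$ to isolate $\log\mathbb{E}[X]$ (the paper phrases this as introducing a free constant $c$ and then optimising it to $c=(1/\mathbb{E}[X]+a)^{\zeta_b(a)}$, but that is the same computation), and for the upper bound apply ordinary Jensen to $\mathbb{E}[\log(1+X)]$. Your caution about the final ``uniformly bounded'' claim is well placed: the paper itself disposes of it with a one-line ``it is easy to see'' and gives no argument, so your proposal is already at parity with the paper on that point; the mechanism you sketch (passing to the regime $\zeta_b(a)=1/a$ of Lemma~\ref{lemma2} and taking $a$ near its infimum) is the natural way to make the logarithmic coefficients match, whereas for generic fixed $a>b$ the gap $\tfrac{1}{\zeta_b(a)}\log(1+\mathbb{E}[X])-\log(1+a\mathbb{E}[X])$ indeed grows like $(\tfrac{1}{\zeta_b(a)}-1)\log\mathbb{E}[X]$.
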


\begin{proof}
For the left-hand side, fix some $c>0,a>b$, then%
\begin{align*}
& \mathbb{E}\left[ \log \left( \frac{{1+X}}{c}\right) \right] \\
&\quad \geq \zeta _{b}\left( a\right) \log \left( 1+a\mathbb{E}\left[ {X}\right]
\right) +\log \left( \frac{{1}}{c}\right)  \\
&\quad =\zeta _{b}\left( a\right) \log \left( \frac{1+a\mathbb{E}\left[ {X}\right]
}{c^{\frac{1}{\zeta _{b}\left( a\right) }}}\right) 
\end{align*}%
using the above lemma. For any (appropriate) $c\left( a,\mathbb{E}\left[ {X}%
\right] \right) >0$ we obtain%
\begin{align*}
\log \left( \mathbb{E}\left[ X\right] \right) & \leq \log \left( \frac{1\!+\!a%
\mathbb{E}\left[ {X}\right] }{c^{\frac{1}{\zeta _{b}\left( a\right) }}}%
\right)  
 \leq \frac{1}{\zeta _{b}\left( a\right) }\mathbb{E}\left[ \log \left( 
\frac{{1\!+\!X}}{c}\right) \right] .
\end{align*}%
Selecting the optimal parameter $c\left( a,\mathbb{E}\left[ {X}\right]
\right) $ as%
\begin{equation*}
c\left( a,\mathbb{E}\left[ {X}\right] \right) =\left( \frac{1}{\mathbb{E}%
\left[ {X}\right] }+a\right) ^{\zeta _{b}\left( a\right) }
\end{equation*}%
yields the desired result.

The right-hand side is%
\begin{align*}
& \frac{1}{\zeta _{b}\left( a\right) }\mathbb{E}\left[ \log \left( \frac{{1+X%
}}{\left( \frac{1}{\mathbb{E}\left[ {X}\right] }+a\right) ^{\zeta _{b}\left(
a\right) }}\right) \right]  \\
&\quad =\frac{1}{\zeta _{b}\left( a\right) }\mathbb{E}\left[ \log \left( {1+X}%
\right) \right] +\mathbb{E}\left[ \log \left( \frac{\mathbb{E}\left[ {X}%
\right] }{1+\mathbb{E}\left[ {X}\right] }\right) \right]  \\
&\quad \leq \frac{1}{\zeta _{b}\left( a\right) }\log \left( {1+}\mathbb{E}\left[ {%
X}\right] \right) +\log \left( \frac{\mathbb{E}\left[ {X}\right] }{1+\mathbb{%
E}\left[ {X}\right] }\right) 
\end{align*}%
due to Jensen's inequality. Eventually, it is easy to see that the difference is uniformly bounded.
\end{proof}


\subsubsection{Product-type functions}

Suppose we have real numbers $a_{1},...,a_{n}\geq \alpha $ and $%
b_{1},...,b_{n}\geq \beta>0 $ (denominator shall be non-zero). Set without loss
of generality $\alpha =\beta =1$. Further, suppose $a_{i}\left( \phi \right) $
and $b_{i}\left( \phi \right) $ both depend on some parameter set $\phi
\subseteq D$ such that any $\phi \subseteq D$ guarantees the lower bounds
for $a_{i},b_{i}$. Consider the metric to be maximized over $D$ as 
\begin{equation*}
f\left( \phi \right) =\prod\limits_{i=1}^{n}\frac{a_{i}\left( \phi \right) 
}{b_{i}\left( \phi \right) }.
\end{equation*}%
Multiply by $\beta /\alpha $ if the lower does not hold. Let $C\left(
s_{a}\right) $ and $C\left( s_{b}\right) $ be the \emph{spread functions}
that depend on the spreads%
\begin{align*}
s_{a}&:=\frac{\max_{i}a_{i}}{\frac{1}{n}\sum\nolimits_{i=1}^{n}a_{i}}\geq 1
,\qquad
s_{b} :=\frac{\max_{i}b_{i}}{\frac{1}{n}\sum\nolimits_{i=1}^{n}b_{i}}\geq 1
\end{align*}%
and 
\begin{align*}
C\left( s_{a/b}\right) & :=\max_{c_{1}}\frac{1}{c_{1}}\left( 1-\sqrt{\frac{%
1+s_{a/b}}{c_{1}}}\right)  .
\end{align*}
Then using above lemmas we can prove the following theorem.

\begin{theorem}
Suppose we have real numbers $a_{i}\left( \phi \right) \geq 1$ and $%
b_{i}\left( \phi \right) \geq 1$ depending on some parameter set $\phi
\subseteq D$. We have 
\begin{equation*}
\frac{\left( \frac{1}{n}\sum\nolimits_{i}a_{i}\left( \phi \right) \right)
^{C\left( s_{a}\right) }}{\frac{1}{n}\sum\nolimits_{i}b_{i}\left( \phi
\right) }\leq \left( \prod\limits_{i=1}^{n}\frac{a_{i}\left( \phi \right) }{%
b_{i}\left( \phi \right) }\right) ^{\frac{1}{n}}\leq \frac{\frac{1}{n}%
\sum\nolimits_{i}a_{i}\left( \phi \right) }{\left( \frac{1}{n}%
\sum\nolimits_{i}b_{i}\left( \phi \right) \right) ^{C\left( s_{b}\right) }}
\end{equation*}%
where $C\left( s_{a}\right) ,C\left( s_{b}\right) $ are the spread
parameters.
\end{theorem}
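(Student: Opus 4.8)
The plan is to factor the geometric mean of the ratios,
\[
\Bigl(\prod_{i=1}^{n}\frac{a_{i}(\phi)}{b_{i}(\phi)}\Bigr)^{1/n}
=\frac{\bigl(\prod_{i}a_{i}(\phi)\bigr)^{1/n}}{\bigl(\prod_{i}b_{i}(\phi)\bigr)^{1/n}},
\]
and to treat numerator and denominator with one common two-sided building block: for any finite list $c_{1},\dots,c_{n}\ge 1$ with mean $\bar c$ and spread $s=\max_{i}c_{i}/\bar c$, one shows
\[
\bar c^{\,C(s)}\ \le\ \Bigl(\prod_{i=1}^{n}c_{i}\Bigr)^{1/n}\ \le\ \bar c .
\]
Applying the upper half to the $a_{i}(\phi)$ and the lower half to the $b_{i}(\phi)$ and dividing gives the right inequality of the theorem; applying them the other way round gives the left inequality.

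The right-hand estimate $\bigl(\prod_{i}c_{i}\bigr)^{1/n}\le\bar c$ is AM--GM, i.e.\ Jensen's inequality (\ref{eq:jensen}) for the concave map $\log$ and the uniform random variable $X$ on $\{c_{1},\dots,c_{n}\}$, since $\tfrac1n\sum_{i}\log c_{i}=\mathbb{E}[\log X]\le\log\mathbb{E}[X]=\log\bar c$. For the left-hand estimate I would apply Lemma \ref{lemma1} with $f(z)=\log(1+z)$ (concave, $f(0)=0$) to the variable $Z$ uniform on the shifted list $\{c_{1}-1,\dots,c_{n}-1\}$, so that $\tfrac1n\sum_{i}\log(1+(c_{i}-1))=\tfrac1n\sum_{i}\log c_{i}$ and $\mathbb{E}[Z]=\bar c-1$. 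Lemma \ref{lemma1} then gives $\tfrac1n\sum_{i}\log c_{i}\ge\sup_{a>b}\log(1+a(\bar c-1))\,\zeta_{b}(a)$; taking $p=q=2$ in the definition of $\zeta_{b}$ yields $\zeta_{b}(a)\ge\tfrac1a(1-\sqrt{b(2)/a})^{+}$ with $b(2)=\mathbb{E}[Z^{2}]/\mathbb{E}[Z]^{2}$, whose maximum over $a$ is $\tfrac{4}{27\,b(2)}$ (attained at $a=\tfrac94 b(2)\ge 1$, using $b(2)\ge1$), and since $1+a(\bar c-1)\ge\bar c\ge1$ for $a\ge1$ we are left with $\tfrac1n\sum_{i}\log c_{i}\ge\tfrac{4}{27\,b(2)}\log\bar c$.

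The crux is to identify this exponent with the closed form, i.e.\ to show $\tfrac{4}{27\,b(2)}\ge C(s)=\max_{c_{1}}\tfrac1{c_{1}}\bigl(1-\sqrt{(1+s)/c_{1}}\bigr)$, which after maximizing the right side is the statement $b(2)\le 1+s$. Here I would use that the $c_{i}$ are bounded: from $0\le c_{i}-1\le\max_{i}c_{i}-1$ one gets $\mathbb{E}[Z^{2}]\le(\max_{i}c_{i}-1)\mathbb{E}[Z]$, hence $b(2)\le(\max_{i}c_{i}-1)/(\bar c-1)$, equivalently $b(2)=1+\mathrm{Var}(X)/(\bar c-1)^{2}$ together with the Bhatia--Davis bound $\mathrm{Var}(X)\le(\max_{i}c_{i}-\bar c)(\bar c-1)$, and an elementary comparison of $\max_{i}c_{i}$, $\bar c$, $s$ then yields $b(2)\le 1+s$ (with the borderline regime $\max_i c_i>\bar c^{2}$ absorbed by the extra room in $\log(1+a(\bar c-1))\ge\log\bar c$ at $a=\tfrac94 b(2)$). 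Since $\zeta_{b}$ is decreasing in $b$, enlarging $b(2)$ to $1+s$ only weakens the bound, which gives $\bigl(\prod_{i}c_{i}\bigr)^{1/n}\ge\bar c^{\,C(s)}$; combining the four one-sided estimates for the $a_{i}$ and $b_{i}$ and dividing completes the proof, with $C(s_{a})$ and $C(s_{b})$ as the exponents. I expect this last comparison, bounding $b(2)$ by the spread, rather than the reverse-Jensen application itself, to be the only nontrivial point.
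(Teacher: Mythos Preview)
The paper itself omits the proof (``The proof is omitted due to lack of space''), so there is no line-by-line comparison to make. Your overall strategy---splitting the geometric mean of the ratios into two geometric means, bounding one side by Jensen/AM--GM and the other by the reverse Jensen Lemma~\ref{lemma1} applied to $f(z)=\log(1+z)$ and the shifted uniform variable $Z$ on $\{c_i-1\}$---is exactly what the sentence ``using above lemmas we can prove the following theorem'' points to, and the shape of $C(s)$ (the $p=q=2$ instance of $\zeta_b$) confirms this is the intended machinery.

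Where your argument has a genuine gap is the bridge from the moment ratio $b(2)=\mathbb{E}[Z^{2}]/\mathbb{E}[Z]^{2}$ to the spread quantity $1+s$. The inequality $b(2)\le 1+s$ that you need is \emph{false} in general: take $n=3$, $c_{1}=c_{2}=1$, $c_{3}=2$; then $\bar c=4/3$, $s=3/2$, so $1+s=5/2$, while $Z\in\{0,0,1\}$ gives $\mathbb{E}[Z]=1/3$, $\mathbb{E}[Z^{2}]=1/3$, hence $b(2)=3>5/2$. Your own chain via Bhatia--Davis yields only $b(2)\le(M-1)/(\bar c-1)$, and $(M-1)/(\bar c-1)\le 1+s$ is equivalent to $M\le\bar c^{\,2}$, which this example violates. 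The parenthetical escape hatch (``the borderline regime $\max_i c_i>\bar c^{2}$ absorbed by the extra room in $\log(1+a(\bar c-1))\ge\log\bar c$'') is not a proof: you would have to show, for all admissible $(\bar c,M)$, that
\[
\frac{1}{b(2)}\,\log\!\Bigl(1+\tfrac{9}{4}\,b(2)(\bar c-1)\Bigr)\ \ge\ \frac{1}{1+s}\,\log\bar c,
\]
and this two-variable inequality (after replacing $b(2)$ by its upper bound $(M-1)/(\bar c-1)$) still needs an honest argument, not a remark. Until that step is closed---or replaced, e.g.\ by invoking Lemma~\ref{lemma2} directly with $a=(M-1)/(\bar c-1)$ and then comparing $\tfrac{\bar c-1}{M-1}\log M$ to $C(s)\log\bar c$---the lower half of your building block, and hence both inequalities of the theorem, remain unproven.
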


Obviously, depending on the spread of $a_{1},...,a_{n}$ and $b_{1},...,b_{n}$
we can replace product terms with sums. Applications of such estimations can
be applied in waveform design, see e.g. \cite{SPW+17_PIMRC}.

The proof is omitted due to lack of space.

\section{Neural Estimation of Mutual Information}
\label{sec:mi_estimation}
Precise estimation of mutual information is a long standing problem due to its dependence on the underlying probabilities, which are normally unknown in most applications, and only observable through samples. Classical approaches tackled this problem based on a binning of the probability
space \cite{fraser1986independent,darbellay1999estimation}, $k$-nearest
neighbor statistics
\cite{kraskov2004estimating,gao2015efficient,GaoEstimatorMI}, maximum
likelihood estimation \cite{suzuki2008approximating}, and variational lower
bounds \cite{barber2003algorithm}. However, most of these techniques are limited to very low dimensional problems. However, due to its use in the field of deep learning, for example as a metric in representation learning, it received renewed interest. In particular, \cite{belghazi2018mine}, showed a promising new direction in combining deep learning methods with variational bounds which resulted in a surge of papers building upon this new direction. It also presented the first estimator within this framework, the mutual information neural estimator (MINE), which is based on the Donsker-Varadhan lower bound of the Kullback-Leibler divergence but is biased, see~\cite{poole2018variational}. Another closely related estimator is the Nguyen-Wainwright-Jordan (NWJ) estimator, based on the lower bound \cite{nguyen2010estimating}, also known as f-GAN \cite{nowozin2016f}. This estimator is unbiased and can be derived through the Fenchel duality. Both estimators are limited by their variance which scales exponentially with the mutual information. The info noise contrastive estimation (InfoNCE) estimator \cite{oord2018representation} was derived in the context of representation learning, and gives a bound with low variance, but high bias, dependent on the sample size and the mutual information. Recent works have combined the high variance, low bias estimators (NWJ, MINE) with the high bias, low variance NCE method in \cite{poole2018variational}, \cite{sinha2021hybrid}. Note that the marginal term is the culprit of high variance fluctuations due to the exponential function inside the expected value. This part of the estimator can be dominated by extremely rare events, which are unlikely to be sampled from, see~\cite{pmlr-v108-mcallester20a}.

\subsection{Mutual Information Estimators from Unnormalized Gibbs Measures}
Let us identify $P(X,Y)$ with the ground
truth joint probability measure $\mathbb{P}$ and $\mathbb{Q}:=\mathbb{P}_{X}%
\times\mathbb{P}_{Y}$ with the product of the marginals $P(X)$ and $P(Y)$. We use the notation $\mathbb{P}\ll\mathbb{Q}$, to denote that $\mathbb{P}$ is absolutely continuous w.r.t. $\mathbb{Q}$.
Further, let $\mathbb{P}^{n}$ and $\mathbb{Q}^{n}$ denote the empirical
measures from a set of i.i.d. samples. 
Moreover, let $\mathcal{F}$ be a family of functions $T_{\theta}:
\mathcal{X} \times\mathcal{Y}\rightarrow\mathbb{R}$ parametrized by the
weights $\theta\in\Theta$ which can be a neural network.

To derive the Donsker-Varadhan representation of the Kullback-Leibler divergence, i.e., the starting point of MINE
\begin{equation*}
   D_{KL}(\mathbb{P}||\mathbb{Q})=\sup_{T:\Omega\rightarrow \mathbb{R}} \mathbb{E}_{\mathbb{P}}[T]-\log \mathbb{E}_{\mathbb{Q}}[e^T] ,
\end{equation*} one can utilize Gibbs measures. Let $\mathbb{G}$ be any positive
measure with total variation $|\mathbb{G}|=\int d\mathbb{G}$. One can now identify a Gibb's measure $d\mathbb{G}=\frac{e^{T}}%
{\mathbb{E}_{\mathbb{Q}}[e^{T}]}d\mathbb{Q}$ for a given critic $T$, which needs to be optimized. One can see that by construction,
$\mathbb{Q}$ dominates $\mathbb{G}$, i.e.,$\mathbb{G}\ll \mathbb{Q}$ and $\mathbb{G}$ is a probability measure.

Hence, we have%
\begin{equation*}
\mathbb{E}_{\mathbb{P}}\log\frac{d\mathbb{G}}{d\mathbb{Q}} =\mathbb{E}%
_{\mathbb{P}}[T]-\log\mathbb{E}_{\mathbb{Q}}[e^{T}] \leq
\mathbb{E}_{\mathbb{P}}\log\frac{d\mathbb{P}}{d\mathbb{Q}} \label{eq:ineq}%
\end{equation*}
where the inequality is due to the
positiveness of the Kullback-Leibler divergence $D_{\text{KL}}(\mathbb{P}%
\|\mathbb{G})$, cf. \cite[Th. 4]{belghazi2018mine}. Moreover, a non-unique optimum is $T^{\ast}=\log
\frac{d\mathbb{P}}{d\mathbb{Q}}+c$, i.e., the estimate of MINE can be
unnormalized. However, as briefly discussed above, the MINE estimator suffers from a bias because of the $\log$ in the marginal term, when using Monte Carlo sampling, i.e.
$\mathbb{E}_{\mathbb{Q}}[\log\mathbb{E}_{\mathbb{Q}^{n}}[e^{T}]]\leq
\log\mathbb{E}_{\mathbb{Q}}[e^{T}]$, which leads to instability.

An interesting new direction is obtained when we consider the family of unnormalized Gibbs measures.
\begin{definition}
An unnormalized Gibbs measure is defined through%
\[
d\mathbb{G}=\frac{e^{T}}{Z\left( T\right)  }d\mathbb{Q},\;\exists
c\in\mathbb{R}:\int\frac{e^{T+c}}{Z\left(  T+c\right)  }d\mathbb{Q}=1
\]
where $Z$ is some normalization function.
\end{definition}

For the NWJ estimator, where $d\mathbb{G}=\frac{e^{T}}{\exp\left(  e^{-1}\mathbb{E}_{\mathbb{Q}%
}[e^{T}]\right)  }d\mathbb{Q}$,
we have that%
\begin{align*}
\mathbb{E}_{\mathbb{P}}\log\frac{d\mathbb{G}}{d\mathbb{Q}}  &  =\mathbb{E}%
_{\mathbb{P}}[ T  ] -e^{-1}\mathbb{E}_{\mathbb{Q}%
}[e^{T}]\\
&  \leq\mathbb{E}_{\mathbb{P}}[T]-\log\mathbb{E}_{\mathbb{Q}}[e^{T}]
\leq\mathbb{E}_{\mathbb{P}}\log\frac{d\mathbb{P}}{d\mathbb{Q}}%
\end{align*}
by the simple inequality $\log\left(  x\right)  \leq\frac{x}{e}$, cf. \cite{poole2018variational}. Notably, the
measures identified with $T$ in the first and second line are actually
different, but all inequalities become tight for $T^{\ast}=\log\frac
{d\mathbb{P}}{d\mathbb{Q}}+1$. As discussed earlier, the resulting NWJ estimator is unbiased, due to $\mathbb{E}_{\mathbb{Q}}[e^{-1}\mathbb{E}_{\mathbb{Q}^{n}}[e^{T}]]=
\mathbb{E}_{\mathbb{Q}}[e^{T-1}]$.

\subsection{Bounds on Variance}
Lower bounds for the marginal terms of the MINE and NWJ estimator follow from the analysis in 
\cite[Th. 2]{song2019understanding}. For MINE, the variance $\mathbb{V}_{\mathbb{G},\mathbb{Q}}$
can be given as
\[
\liminf_{n}n\mathbb{V}_{\mathbb{G},\mathbb{Q}}[\log\mathbb{E}_{\mathbb{Q}^{n}%
}[e^{T}]]\geq e^{D_{\text{KL}}(\mathbb{G}\|\mathbb{Q})}-1,
\]
while for the NWJ estimator it can be shown that 
\[
\mathbb{V}_{\mathbb{G},\mathbb{Q}}[\mathbb{E}_{\mathbb{Q}^{n}}[e^{T}%
]]\geq\frac{e^{D_{KL}(\mathbb{G}||\mathbb{Q})}-|\mathbb{G}|^{2}}%
{n}%
\]
due to the independence of ${\mathbb{Q}^{n}}$ and
${\mathbb{P}^{n}}$. Thus, it can be seen that the variance of MINE and NWJ scales
exponentially with the estimated mutual information. Notably, \cite{song2019understanding} has addressed this issue with the smoothed mutual information lower bound estimator (SMILE) which simply clips $T$ in the marginal term by $\tau$ and thereby bounds the variance of the marginal term but introduces a bias. However, since the clipping is on the marginal only the estimator is in general instable.

\subsection{Extension with the Reverse Jensen Inequality}
Our new approach is based on the partial converse of Jensen's inequality for the $\log(1+X)$ function above in Theorem~\ref{thm:log}.
With that result, we can now identify with any $T$ the measure
\begin{equation*}
d\mathbb{G}=\max_{a>b}\frac{e^{T}}{\exp \frac{1}{\zeta _{b}\left( a\right)} \mathbb{E}\left[  \log\left(
\frac{1+e^T}{ c} \right)\right]  }d\mathbb{Q},  
\end{equation*}
where $
c\left(  a,\mathbb{E}\left[  e^T\right]  \right)  =\left(  \frac{1}%
{\mathbb{E}\left[  e^T\right]  }+a\right)  ^{\zeta _{b}\left( a\right)}.%
$

Hence, we can write the inequality chain
\begin{align*}
  \mathbb{E}_{\mathbb{P}}\log\frac{d\mathbb{G}}{d\mathbb{Q}}
&  =\mathbb{E}_{\mathbb{P}}\left(  T  \right)  -\min
_{a>b}\left( \frac{1}{\zeta _{b}\left( a\right)} \mathbb{E}\left[ \log \left(
\frac{1+e^T}{ c} \right) \right] \right)  \\
&  \leq\mathbb{E}_{\mathbb{P}}[T]-\log\mathbb{E}_{\mathbb{Q}}%
[e^{T}]  \leq\mathbb{E}_{\mathbb{P}}\log\frac{d\mathbb{P}}{d\mathbb{Q}}.%
\end{align*}

This leads to the following estimator, which is due to its construction, a lower bound on the mutual information:

\begin{definition}
[Reverse-Jensen-Estimator (RJE)]
\begin{equation*}
I_{\text{RJE}}:=\sup_{T\in\mathcal{F}} \mathbb{E}_{\mathbb{P}}^{}[T]-\frac{1}{\zeta _{b}\left( a\right)} \mathbb{E}_{\mathbb{Q}}\left[ \log \left(
\frac{1+e^T}{ c} \right) \right] .
\end{equation*}
\end{definition}
This time (proof deferred) $\frac{1}{\zeta _{b}\left( a\right)} \mathbb{E}_{\mathbb{Q}}\mathbb{E}_{\mathbb{Q}^n}\left[ \log \left(\frac{1+e^T}{ c(a,\mathbb{E}_{\mathbb{Q}^n})} \right) \right]\geq
\log\mathbb{E}_{\mathbb{Q}}[e^{T}]$, provided $\mathbb{E}_{\mathbb{Q}}[e^{T}]$ is large, so that the estimator is stable. This can be also now seen from the numerical results.

\vspace*{0.5\baselineskip}

\section{Numerical Experiments}
\label{sec:numerics}
We implemented two versions of our MI estimator: a) one which contains the moments ratio $b$ defined in Lemma \ref{lemma1}, and b) in which $b$ is treated as a predefined parameter to be set for the problem at hand. Hence, for a batch of $n$ pairs $(x,y)$ drawn from the joint distribution $p_{XY}$ and $n$ pairs $\hat{x}, \hat{y}$ from the product of the marginal distributions $p_Xp_Y$ and a critic $f: \R^m\times \R^m \rightarrow \R$, an estimate of the mutual information $I(X;Y)$ is calculated as follows:

\begin{enumerate}[a)]
\item 
We set the parameter $a$ in Theorem \ref{thm:log} to $a=cr$ for a scalar parameter $c>0$ and moments ratio
\begin{equation*}
    r=\frac{1}{n}\sum\limits_{i=1}^n e^{2f(x_i,y_i)}\left(\frac{1}{n}\sum\limits_{i=1}^n e^{f(\hat{x}_i, \hat{y}_i)}\right)^{-2}.
\end{equation*}
 The MI estimate is then computed as
 \begin{align*}
    \label{eq:est_a}
    \hat{I}(X;Y) &= \frac1n \sum\limits_{i=1}^n f(x_i,y_i)- \log\left( \frac{\frac1n \sum\limits_{i=1}^ne^{f(\hat{x}_i, \hat{y}_i)}}{1-cr\frac1n \sum\limits_{i=1}^ne^{f(\hat{x}_i,\hat{y}_i)}}\right) \\
    &\qquad- \frac{c}{1-\sqrt{1/c}}\frac1n \sum\limits_{i=1}^n\log\left( 1+e^{f(\hat{x}_i,\hat{y}_i)}\right) .
\end{align*}

\item For scalar parameters $0<b<a$, we estimate the MI as
\begin{align*}
    \hat{I}(X;Y) &= \frac1n \sum\limits_{i=1}^n f(x_i,y_i)+
    \log\left( \frac{1}{\frac1n \sum\limits_{i=1}^ne^{f(\hat{x}_i,\hat{y}_i)}}+a\right) \\ &\qquad-\frac{a}{1-\sqrt{b/a}}\frac1n \sum\limits_{i=1}^n \log\left(
    1+e^{f(\hat{x}_i,\hat{y}_i)}\right).
\end{align*}
\end{enumerate}

The performance of the RJE is assessed through a setup with two multivariate Gaussian random variables $X, Y\in\R^n$ with means 0 and covariance matrices $C_X = C_Y = I_n$ and $C_{XY} = \nu I_n$ for $\nu\in (0,1)$.
Hence, 
\begin{equation*}
    \begin{bmatrix}
    X \\
    Y
    \end{bmatrix} \sim \mathcal{N}\left(0, C\right), \quad C= \begin{bmatrix}
    I_n & \nu I_n \\
    \nu I_n & I_n
    \end{bmatrix},  
\end{equation*}
and the mutual information is given by 
   $I(X;Y) = -\frac12 \log\det C$.
The critic $f:\R^n\times \R^n \rightarrow \R$ computes scores from samples drawn from the joint distribution $p_{XY}$, and from the product of marginal distributions $p_Xp_Y$, obtained by pairing the samples of one batch from $X$ with shuffled samples from $Y$, which are then fed into the estimator to produce the MI estimates. The architecture of the critic network is as follows:
The samples from $X$ and $Y$ are concatenated and then forwarded through two fully connected layers with 100 neurons and relu activations. The final layer is a fully connected layer with linear activation and 1 output neuron.
For dimension 50 and varying parameter $\nu$, estimator a) is trained for 400 epochs with a batch size of 256. In the estimator, we set $c=2$. The estimated MI and its variance over 1000 batches is shown in Fig.~\ref{fig:est_a_gaussian} and \ref{fig:est_b_gaussian}. \emph{It is worth noting that the RJE is stable without any clipping whatsoever as suggested.}

\begin{figure}
    \centering
    \includegraphics[width=.45\textwidth]{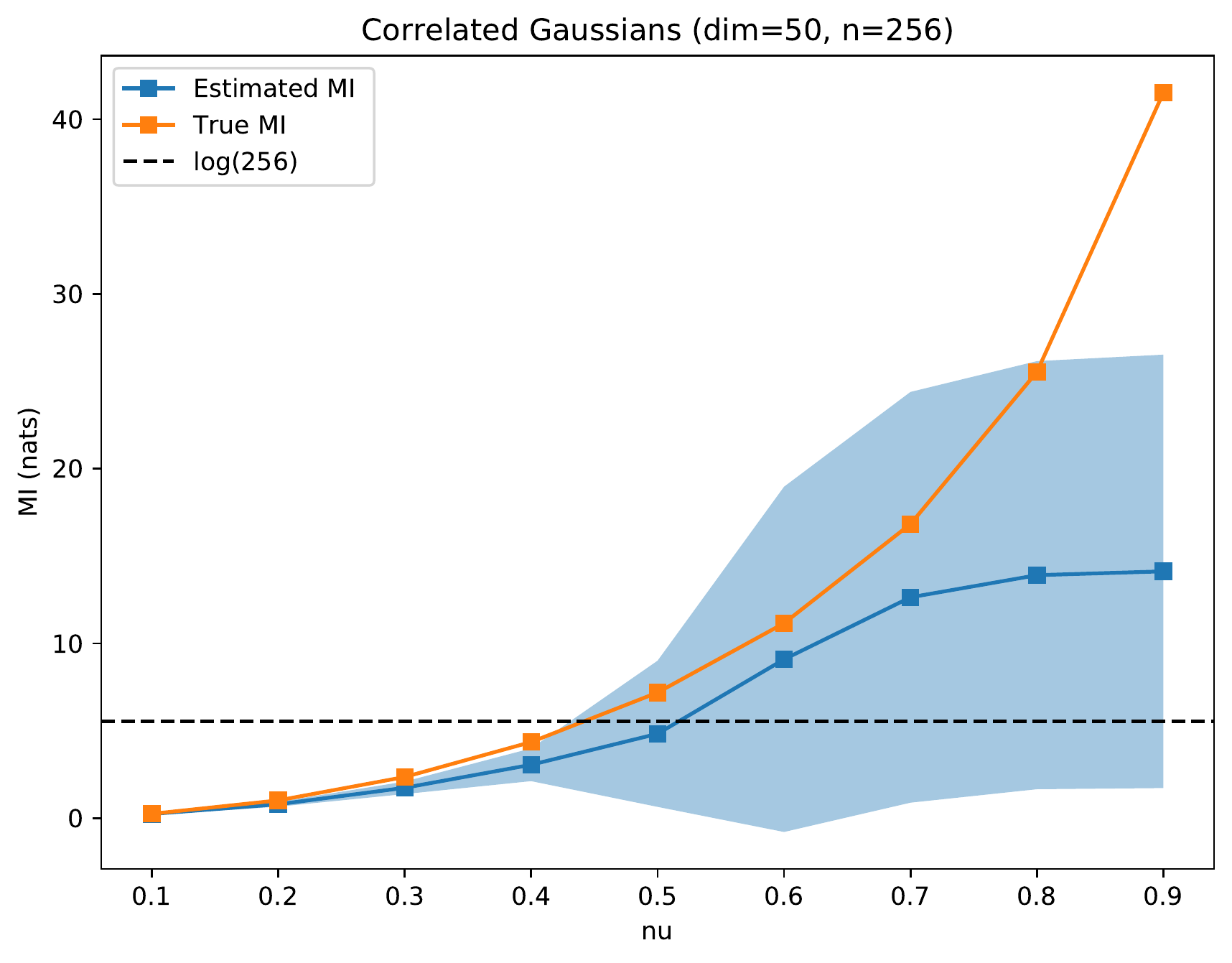}
    \caption{MI estimates for correlated multivariate Gaussians, estimator a)}
    \label{fig:est_a_gaussian}
\end{figure}

\begin{figure}
    \centering
    \includegraphics[width=.45\textwidth]{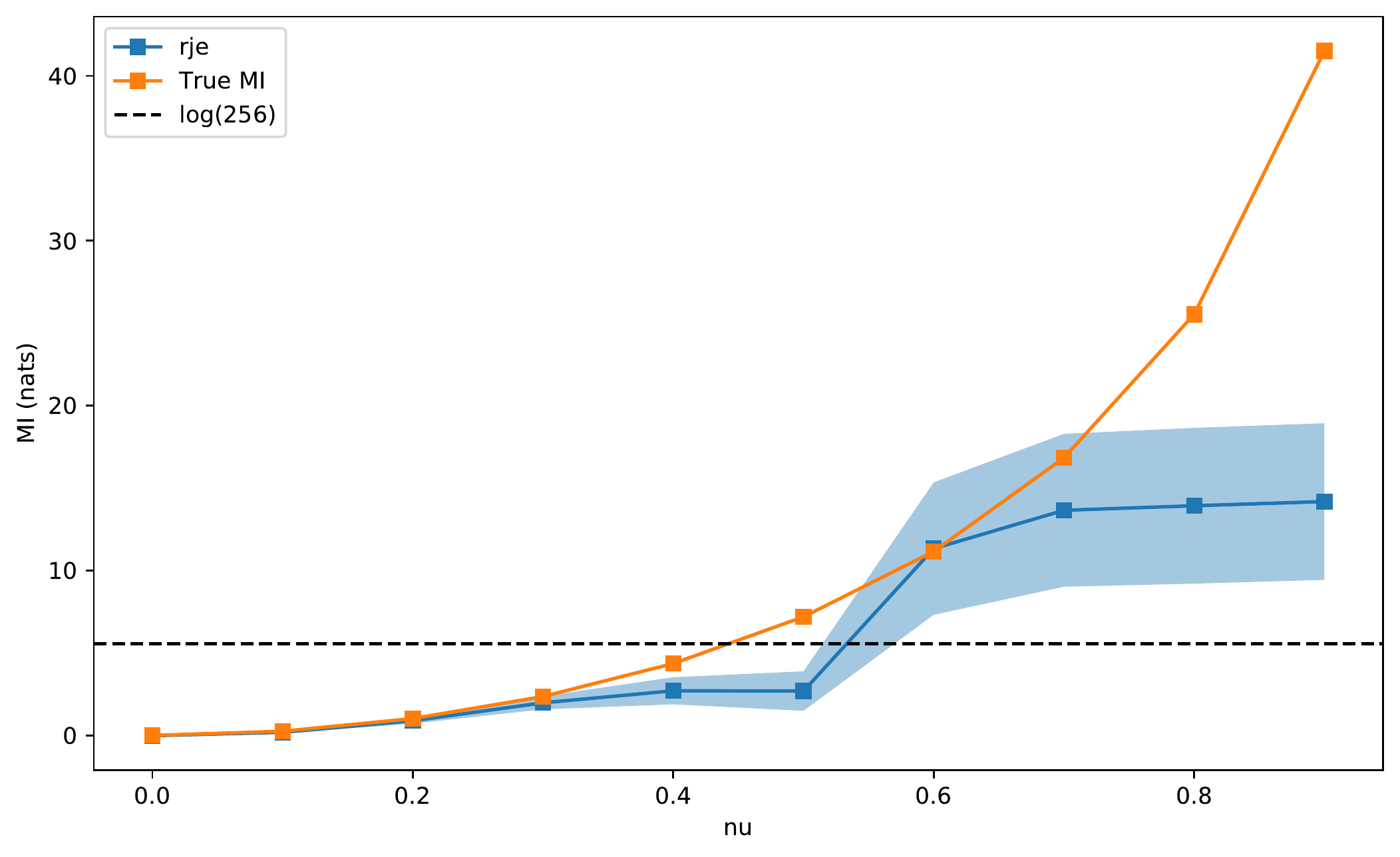}
    \caption{MI estimates for correlated multivariate Gaussians, estimator b)}
    \label{fig:est_b_gaussian}
\end{figure}

\section{Discussion and conclusions}
\label{sec:conclusion}
We proved a reversion of the Jensen inequality for a class of functions with relevance in statistics and machine learning. From these results, we derived an estimator for mutual information from finite samples and evaluated its performance on synthetic data. These types of estimators have gained huge interest in the field of machine learning due to the use of MI as regularization for generative models \cite{chen2016infogan, kingma2013auto} and as objective for representation learning \cite{hjelm2018learning}, among others. Black box estimators of MI such as the one derived here suffer from fundamental limitations \cite{pmlr-v108-mcallester20a}, and many of the estimators found in the literature either fail to capture high MI, or exhibit huge variance, stemming from the marginal term in the variational formulation. By applying our reverse Jensen result, the expectation over the product of marginal distributions on $X$ and $Y$ is pulled out of the logarithm, thereby reducing the variance of the estimation with only a small bias, making the estimator very stable.
The proper tuning of the free parameters in our MI estimators, as well as a thorough investigation of its performance in the machine learning applications mentioned above are a topic of future research.

\medskip

{\small
\newpage
}
\balance
{\small
\bibliographystyle{IEEEtran}
\bibliography{ref}
}

\end{document}